\definecolor{myurlcolor}{rgb}{0,0,0.7}
\definecolor{myrefcolor}{rgb}{0.1,0,0.9}
\newcommand\norm[1]{\lVert#1\rVert}
\renewcommand{\eqref}[1]{Eq.~(\ref{#1})} %
\newtheorem{thm}{\protect\theoremname}
\newenvironment{proof}[1][\protect\proofname]{\par
	\normalfont\topsep6\p@\@plus6\p@\relax
	\trivlist
	\itemindent\parindent
	\item[\hskip\labelsep\scshape #1]\ignorespaces
}{%
	\endtrivlist\@endpefalse
}
\providecommand{\proofname}{Proof}
\newtheorem*{result*}{Main Result}
\providecommand{\factname}{Fact}
\providecommand{\theoremname}{Theorem}
\providecommand{\claimname}{Claim}
\providecommand{\lemmaname}{Lemma}
\providecommand{\definitionname}{Definition}
\providecommand{\corollaryname}{Corollary}
\definecolor{KB}{rgb}{0.4,0.3,0.9}
\definecolor{RAc}{rgb}{0.9,0.3,0.2}
\newcommand{\sectionMain}[1]{
\let\oldaddcontentsline\addcontentsline%
\renewcommand{\addcontentsline}[3]{}%
\section{#1}
\let\addcontentsline\oldaddcontentsline
}
\begin{document}

%
\title{Self-testing of genuine multipartite non-local and non-maximally entangled states}


\author{Ranendu Adhikary}
 \email{ronjumath@gmail.com}
\affiliation{Electronics and Communication Sciences Unit, Indian Statistical Institute, 203 B.T. Road, Kolkata 700108, India.}

\begin{abstract}
Self-testing enables the characterization of quantum systems with minimal assumptions on their internal working as such it represents the strongest form of certification for quantum systems. In the existing self-testing literature, self-testing states that are not maximally entangled, but exhibit genuine multipartite nonlocality, have remained an open problem. This is particularly important because, for many-body systems, genuine multipartite nonlocality has been recognized as the strongest form of multipartite quantum correlation. In this work, we present a Cabello-like paradox for scenarios involving an arbitrary number of parties. This paradox is a tool for detecting genuine multipartite nonlocality, allowing for the specific identification and self-testing of states that defy the paradox's limits the most, which turn out to be non-maximally multipartite entangled states. While recent results  [\textit{Šupić et al., Nature Physics, 2023}] suggest network self-testing as a means to self-test all quantum states, here we operate within the standard self-testing framework to self-test genuine multipartite non-local and non-maximally entangled states.

\end{abstract}
\maketitle

\paragraph{Introduction.---}

Since the 1920s, quantum mechanics has not only revolutionized our understanding of fundamental physics but has also driven technological progress. This field has been instrumental in the development of everyday technologies, such as lasers, transistors, and the Global Positioning System (GPS). In recent years, advancements in quantum theory have led to the emergence of innovative applications like quantum computing, quantum networks, and quantum-encrypted communication. At the forefront of these advancements is the phenomenon of quantum entanglement~\cite{nielsen2010quantum,bengtsson2017geometry}, a unique aspect of quantum theory that was recognized by the 2022 Nobel Prize for its critical role in quantum information theory and the foundations of quantum mechanics. The exploration of quantum entanglement's complex theory remains a dynamic and rich area of research, full of unanswered questions that, once solved, could pave the way for new technological applications~\cite{horodecki2009quantum,berta2023tangled,erhard2020advances}.

As the scale of quantum devices increases, certifying their internal workings becomes increasingly complex due to the vastness of the associated Hilbert space~\cite{eisert2020quantum}. To fully describe an \( n \)-qubit pure state, one needs \( 2^n - 1 \) complex parameters. Quantum state tomography, a common approach for system certification, involves obtaining the description of a quantum state by performing measurements on an ensemble of identical quantum states. However, estimating an unknown quantum state within a dimension \( d \) to an error \( \varepsilon \) (in \( l_1 \) norm) requires \( \Theta\left(\frac{d^2}{\varepsilon^{2}}\right) \) copies of that state~\cite{o2016efficient}. This requirement makes tomography impractical for large systems. Additionally, tomography is device-dependent, assuming characterized measurements, which is not always feasible in scenarios involving third-party devices.

The necessity to overcome these challenges has led to the development of \textit{self-testing protocols}~\cite{MY2004}.  Crucially, these protocols are device-independent, meaning they do not require any prior knowledge or characterization of the quantum devices being tested. Moreover, the number of measurements scales linearly with number of qubits. This makes them highly valuable for scalable and assumption-minimal certification of quantum systems~\cite{mckague2012robust,bamps2015sum,wang2016all,kaniewski2017self,vsupic2016self,bowles2018self,kaniewski2019maximal,sarkar2021self,rai2021device,valcarce2022self,rai2022self,mckague2012robust,yang2013robust,bharti2019robust,bharti2019local,shrotriya2021robust,shrotriya2022certifying,bharti2022graph,xu2023state,bharti2022graph,yang2014robust,kaniewski2016analytic,baccari2020scalable,bancal2015physical,coladangelo2017all,vsupic2023quantum,panwar2023elegant}. In their study, Mayers and Yao demonstrated that certain statistical correlations exist which necessitate that black boxes replicating these correlations must inherently produce the maximally entangled state \( |\phi_{+}\rangle = \frac{1}{\sqrt{2}} (|00\rangle + |11\rangle) \). Consequently, these Bell correlations serve as a self-test for the state \( |\phi_{+}\rangle \). The entanglement in the quantum state allows for correlations that allow the underlying entangled state to be self-tested.

A natural question arises: which entangled states can be self-tested?  In a breakthrough paper, Coladangelo et al. demonstrated that all pure bipartite states are amenable to self-testing~\cite{coladangelo2017all}. However, in the broader context of multipartite states, our understanding remains limited. This lack of comprehensive knowledge is not unexpected, given that entanglement presents more complexity and challenges in characterization when it involves multiple parties. In particular, self-testing for entangled states in the multipartite setting that admit \textit{genuine multipartite nonlocality}~\cite{svetlichny1987distinguishing} has remained poorly understood. Genuine multipartite nonlocality is a stronger form of nonlocality and understanding entangled states that manifest them can deepen our understanding of quantum theory.  For multipartite systems, genuine multipartite nonlocality has been recognized as the strongest form of multipartite quantum correlation~\cite{brunner2014bell,scarani2019bell}. Previous works focused on self-testing for maximally entangled states \cite{panwar2023elegant} that admit genuine multipartite nonlocality and it is unclear whether the non-maximally entangled states that admit genuine multipartite nonlocality can be self-tested.

In this paper, we answer the above question in the affirmative. We present a Cabello-like paradox for a scenario involving an arbitrary number of participants, each having access to two measurements that can yield any arbitrary number of outcomes. Our paradox helps witness genuine multipartite nonlocality. The state maximally violating the constraints of our paradox can be self-tested. Regarding robust self-testing, we provide a partial robustness result in the three-party scenario where we consider the non-ideal state to be a three-qubit state. Moreover, it is important to note that the aforementioned state is non-maximally entangled. We highlight that unlike Ref~\cite{vsupic2023quantum}, our work does not require quantum networks. We operate in the standard Bell-test based self-testing framework and do not need any additional assumptions unlike Ref~\cite{vsupic2023quantum}.

\paragraph{Genuine tripartite non-local correlation.---}\label{sec1}

Consider three distant observers denoted as $\mathcal{A}_{1}$, $\mathcal{A}_{2}$ and $\mathcal{A}_{3}$ engaged in rounds of measurements on tripartite quantum systems. In each round, every party selects a local measurement denoted as $\hat{x}_i$, and subsequently records the resulting outcome as $x_i$. The resultant joint conditional probability distribution, denoted as $P_{A_{1}A_{2}A_{3}}(x_1 x_2 x_3 | \hat{x}_1 \hat{x}_2 \hat{x}_3)$, is classified as local if it can be factorized. This factorization holds under the additional condition of possessing knowledge about a potentially concealed shared classical cause $\lambda$:

{\small \begin{equation}\label{eq:local_equation}
\begin{split}
&P_{\mathcal{L}}(x_1 x_2 x_3 | \hat{x}_1 \hat{x}_2 \hat{x}_3)  \\
& = \sum_{\lambda} \nu(\lambda) P_{A_{1}}(x_1 | \hat{x}_1, \lambda)P_{A_{2}}(x_2 | \hat{x}_2, \lambda)P_{A_{3}}(x_3 | \hat{x}_3, \lambda)  
\end{split}  
\end{equation}}

The discrete random variable representing the common cause is denoted as $\lambda$, with a distribution $\nu(\lambda) \geq 0$ and the normalization condition $\sum_{\lambda} \nu(\lambda) = 1$. The conditional probability distribution $P_{A_{i}}(x_i | \hat{x}_i, \lambda)$ characterizes the behavior of party $\mathcal{A}_{i}$. A distribution $P(x_1 x_2 x_3 | \hat{x}_1 \hat{x}_2 \hat{x}_3)$ is considered non-local if it cannot be decomposed as described in equation (\ref{eq:local_equation}). Given the measurement setups, we work under the assumption of the NS principle \cite{ghirardi1980general}. This implies that party $\mathcal{A}_{1}$ is unable to communicate with other parties through the choice of measurement. Specifically, the equality $P_{A_{2}A_{3}}(x_2 x_3 | \hat{x}_2 \hat{x}_3) = P_{A_{2}A_{3}}(x_2 x_3 | \hat{x}_1 \hat{x}_2 \hat{x}_3) = \sum_{x_1}P(x_1 x_2 x_3 | \hat{x}_1 \hat{x}_2 \hat{x}_3), \forall x_1$ holds, and the analogous statements apply to parties $\mathcal{A}_{2}$ and $\mathcal{A}_{3}$.

Suppose we relax the locality assumption such that any pair of parties can group and share non-local resources. This type of hybrid local and non-local model leads to joint conditional probability distributions,

{\small \begin{equation}\label{eq:bi_equation}
\begin{split}
&P_{2\ versus\ 1}(x_1 x_2 x_3 | \hat{x}_1 \hat{x}_2 \hat{x}_3)  \\
& = \sum_{\lambda_1} \nu_1(\lambda_1) P_{A_{1}A_{2}}(x_1 x_2|\hat{x}_1 \hat{x}_2, \lambda_1)P_{A_{3}}(x_3 | \hat{x}_3, \lambda_1)  \\ 
&+ \sum_{\lambda_2} \nu_2(\lambda_2) P_{A_{2}A_{3}}(x_2 x_3|\hat{x}_2 \hat{x}_3, \lambda_2)P_{A_{1}}(x_1 | \hat{x}_1, \lambda_2)  \\
&+ \sum_{\lambda_3} \nu_3(\lambda_3) P_{A_{1}A_{3}}(x_1 x_3|\hat{x}_1 \hat{x}_3, \lambda_3)P_{A_{2}}(x_2 | \hat{x}_2, \lambda_3)   
\end{split}  
\end{equation}}

with \(\nu_i(\lambda_i) \geq 0\) and \(\sum_{i, \lambda_i} \nu_i(\lambda_i) = 1\).
Distributions denoted as $P(x_1 x_2 x_3 | \hat{x}_1 \hat{x}_2 \hat{x}_3)$ that cannot be expressed in the form described in equation (\ref{eq:bi_equation}) are termed as genuine tripartite non-local distributions. According to findings in \cite{gallego2012operational}, the original concept of multipartite non-locality introduced by Svetlichny \cite{svetlichny1987distinguishing} encounters practical challenges when operationalized. To address these challenges, it is assumed that the NS principle \cite{popescu1994quantum} also holds at the level of distributions $P_{A_{i}A_{j}}(x_i x_j| \hat{x}_i \hat{x}_j, \lambda)$, implying that the conditional probabilities $P(x_i|\hat{x}_i, \lambda) = P(x_i| \hat{x}_i \hat{x}_j, \lambda) = \sum_{x_j}P(x_i x_j| \hat{x}_i \hat{x}_j, \lambda), \forall \hat{x}_j$, are well defined for all $\lambda$.

Let us consider a scenario where all three parties Alice, Bob, and Charlie have two possible measurement choices $\hat{x}_1 \in \{A_0, A_1\}$, $\hat{x}_2 \in \{B_0, B_1\}$ and $\hat{x}_3 \in \{C_0, C_1\}$ respectively. Each measurement has two possible outcomes, say $\{+,-\}$. We define a tripartite correlation $P(x_1, x_2, x_3 | \hat{x}_1, \hat{x}_2, \hat{x}_3)$ that satisfies the following equality constraint,

{\small \begin{equation}\label{eq:cab3}
\begin{split}
& P_{AB}(++ | A_1B_0) = 0, \\
& P_{BC}(++ | B_1C_0) = 0, \\
& P_{AC}(++ | A_0C_1) = 0. \\
\end{split}  
\end{equation}}

Let us define $p_3 \equiv P(+++ | A_0B_0C_0)$ and $q_3 \equiv P(--- | A_1B_1C_1)$. We define their difference to be $\mathcal{S}_3:=p_3-q_3$. We will first show that no local theory can satisfy $\mathcal{S}_3 > 0$. Further, we will see that it is not just any non-local correlation; rather, it is a genuine non-local correlation.

\begin{thm}\label{theorem1}
If a non-local correlation satisfies the conditions in (\ref{eq:cab3}) with $\mathcal{S}_3 > 0$, then it is a genuine tripartite non-local correlation.    
\end{thm}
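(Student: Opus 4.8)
The plan is to rule out every model of the hybrid form (\ref{eq:bi_equation}): if the correlation admits no such decomposition then, by definition, it is genuine tripartite non-local. This also subsumes the ``no local theory'' claim made just before the statement, since a fully local distribution (\ref{eq:local_equation}) is the special case of (\ref{eq:bi_equation}) with all weight on one bipartition and product $\lambda$-boxes; in particular the hypothesis that the correlation is non-local is automatically implied by $\mathcal{S}_3>0$ under (\ref{eq:cab3}). The first step is a positivity reduction. Each equality in (\ref{eq:cab3}) states that a two-party marginal of $P$ is zero, and by (\ref{eq:bi_equation}) that marginal is a sum of nonnegative terms $\nu_i(\lambda_i)\,P_{A_iA_j}(\cdot|\cdot,\lambda_i)\,P_{A_k}(\cdot|\cdot,\lambda_i)$ over the three bipartitions and over $\lambda_i$, so each term vanishes. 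Hence (i) each of the three bipartition components of $P$ satisfies all three equalities of (\ref{eq:cab3}) on its own, and (ii) inside a fixed component --- say the $\{\mathcal{A}_1\mathcal{A}_2\}|\{\mathcal{A}_3\}$ one, with two-party boxes $P_{A_1A_2}(\cdot|\cdot,\lambda)$ and one-party responses $P_{A_3}(\cdot|\cdot,\lambda)$ --- every $\lambda$ of positive weight obeys $P_{A_1A_2}(++|A_1B_0,\lambda)=0$, $P_{A_2}(+|B_1,\lambda)\,P_{A_3}(+|C_0,\lambda)=0$, and $P_{A_1}(+|A_0,\lambda)\,P_{A_3}(+|C_1,\lambda)=0$, where the single-party marginals $P_{A_1}(\cdot|\cdot,\lambda)$ and $P_{A_2}(\cdot|\cdot,\lambda)$ are well defined precisely because the no-signaling principle is assumed at the level of the $\lambda$-conditioned boxes. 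Since $p_3$ and $q_3$ are linear in $P$, it now suffices to prove $p_3\le q_3$ on each component, and for that to bound the contribution of each $\lambda$-branch.

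I would then exploit the cyclic (order-three) symmetry of the construction: the relabeling $\mathcal{A}_1\mapsto\mathcal{A}_2\mapsto\mathcal{A}_3\mapsto\mathcal{A}_1$, carrying the measurement labels along, permutes the three equalities of (\ref{eq:cab3}) among themselves, fixes $p_3$ and $q_3$, and cycles the three bipartitions, so they lie in one orbit and it is enough to treat $\{\mathcal{A}_1\mathcal{A}_2\}|\{\mathcal{A}_3\}$. Because the no-signaling principle is vacuous for a single party, I may decompose each response $P_{A_3}(\cdot|\cdot,\lambda)$ into deterministic strategies and fold the index into $\lambda$, so that $\mathcal{A}_3$ returns fixed outcomes $c_0,c_1\in\{+,-\}$ on settings $C_0,C_1$; such a branch contributes $\nu(\lambda)\,P_{A_1A_2}(++|A_0B_0,\lambda)$ to $p_3$ when $c_0=+$ (and $0$ otherwise) and $\nu(\lambda)\,P_{A_1A_2}(--|A_1B_1,\lambda)$ to $q_3$ when $c_1=-$ (and $0$ otherwise). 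If $c_0=-$, the $p_3$-contribution is already $0$. If $c_0=+$, the second per-$\lambda$ equality forces $P_{A_2}(+|B_1,\lambda)=0$; if in addition $c_1=+$, the third forces $P_{A_1}(+|A_0,\lambda)=0$, and since $P_{A_1A_2}(++|A_0B_0,\lambda)\le P_{A_1}(+|A_0,\lambda)$ the $p_3$-contribution is again $0$. The only surviving case is $c_0=+$, $c_1=-$, where it remains to show $P_{A_1A_2}(++|A_0B_0,\lambda)\le P_{A_1A_2}(--|A_1B_1,\lambda)$.

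This last inequality is the technical heart, and it follows from marginalization together with the no-signaling relations inside $P_{A_1A_2}(\cdot|\cdot,\lambda)$ and the two vanishing quantities $P_{A_1A_2}(++|A_1B_0,\lambda)=0$, $P_{A_2}(+|B_1,\lambda)=0$. Concretely, $P_{A_1A_2}(++|A_0B_0,\lambda)\le P_{A_2}(+|B_0,\lambda)=P_{A_1A_2}(-+|A_1B_0,\lambda)\le 1-P_{A_1A_2}(+-|A_1B_0,\lambda)=1-P_{A_1}(+|A_1,\lambda)=P_{A_1}(-|A_1,\lambda)=P_{A_1A_2}(--|A_1B_1,\lambda)$: the first inequality is marginalization; the next equality uses no-signaling of $\mathcal{A}_2$'s $B_0$-marginal together with $P_{A_1A_2}(++|A_1B_0,\lambda)=0$; the middle inequality is normalization on $(A_1,B_0)$; the identification $P_{A_1}(+|A_1,\lambda)=P_{A_1A_2}(+-|A_1B_0,\lambda)$ again uses $P_{A_1A_2}(++|A_1B_0,\lambda)=0$; and the last equality uses $P_{A_2}(+|B_1,\lambda)=0$, whence $P_{A_1A_2}(-+|A_1B_1,\lambda)=P_{A_1A_2}(++|A_1B_1,\lambda)=0$. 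Summing the branch bounds with their weights gives $\mathcal{S}_3\le0$ on the component, the cyclic symmetry gives the same on the other two, and linearity of $\mathcal{S}_3$ gives $\mathcal{S}_3\le0$ for every $P$ of the form (\ref{eq:bi_equation}) --- contradicting $\mathcal{S}_3>0$ and proving the correlation is genuine tripartite non-local. I expect the main obstacle to be disciplined bookkeeping rather than a conceptual one: invoking no-signaling only where the model of (\ref{eq:bi_equation}) licenses it, namely on the $\lambda$-conditioned two-party boxes, and arranging the case split over $\mathcal{A}_3$'s deterministic responses so that in the surviving branch the two-party inequality is genuinely forced and not merely plausible.
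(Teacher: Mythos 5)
Your argument is correct, and its genuineness half takes a genuinely different route from the paper's. The paper proceeds by reduction: it first shows separately that no fully local model achieves $\mathcal{S}_3>0$, and then, for a hybrid decomposition of the form \eqref{eq:bi_equation}, derives a deterministic marginal $P_{B}(-|B_1,\lambda)=1$ and invokes the claim that a no-signaling bipartite box with a deterministic marginal factorizes into a product, so that every component collapses to a fully local one and the first step applies. You instead bound each component directly: after the positivity reduction and the case split over $\mathcal{A}_3$'s deterministic responses, you prove the per-branch inequality $P_{A_1A_2}(++|A_0B_0,\lambda)\le P_{A_1A_2}(--|A_1B_1,\lambda)$ by an explicit chain of marginalization, normalization, and no-signaling identities, then sum with the weights. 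Your route is longer but buys two things. First, it subsumes the locality statement as the degenerate case of the hybrid one, so a single argument covers both. Second, and more substantively, it sidesteps a delicate point in the paper's step: for a general no-signaling box, a deterministic marginal on input $B_1$ forces factorization only on that input, i.e.\ $P_{AB}(ab\,|\,x B_1,\lambda)=P_A(a|x,\lambda)P_B(b|B_1,\lambda)$, not of the whole box; the entry $P_{AB}(++|A_0B_0,\lambda)$ that feeds $p_3$ lives on the other input and is not directly controlled by that partial factorization. Your inequality chain supplies exactly the quantitative link between the $(A_0,B_0)$ entry and the $(A_1,B_1)$ entry that the factorization shortcut leaves implicit, so it can be read as a rigorous completion of the paper's sketch as much as an alternative to it. The cyclic-symmetry reduction to a single bipartition is a further economy the paper does not exploit, where it simply asserts that the other cuts are handled similarly.
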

\begin{proof}
    We will first show that no local theory can satisfy conditions (\ref{eq:cab3}) with $\mathcal{S}_3 > 0$. Let us define a local hidden variable $\lambda$ that captures the full description of the system. This $\lambda$ takes values from a set $\Omega$. Moreover, we assert that the comprehensive depiction of the joint system is denoted by $\nu(\lambda)$. In a local description, the joint probability can be expressed as follows: $P(x_1, x_2, x_3 | \hat{x}_1, \hat{x}_2, \hat{x}_3) = \sum_{\lambda \in \Omega}\nu(\lambda) \prod_{j=1}^{3} p(x_j | \hat{x}_j, \lambda)$, where $\nu(\lambda) \geq 0$, $\sum_{\lambda}\nu(\lambda) = 1$. Let's assume that $\mathcal{S}_3 > 0$, which indicates that $p_3$ is greater than $0$. This implies the existence of a nonempty subset $\Lambda \subseteq \Omega$, such that for all $\lambda \in \Lambda$, $P(+|A_0,\lambda)=P(+|B_0,\lambda)=P(+|C_0,\lambda)=1$. By equation (\ref{eq:cab3}) we have $P(-|A_1, \lambda) =P(-|B_1, \lambda) =P(-|C_1, \lambda) = 1$. Therefore, we can deduce that $P(-, -, -|A_1, B_1, C_1, \Lambda) = P(+, +, +|A_0, B_0, C_0)$. However, since $\Lambda \subseteq \Omega$, it follows that $P(-, -, -|A_1, B_1, C_1, \Lambda) \leq P(-, -, -|A_1, B_1, C_1)$. In other words, $\mathcal{S}_3 \leq 0$, which contradicts the initial assumption of $\mathcal{S}_3 > 0$. Therefore, we can conclude that if any correlation satisfies (\ref{eq:cab3}) and gives $\mathcal{S}_3 > 0$, then it must be non-local.

Now we will show that the non-locality we have is a genuine one. If this correlation were not genuine, it could be broken down according to the expression provided in equation (\ref{eq:bi_equation}). Given that $\mathcal{S}_3 > 0$, at least one component within the decomposition must yield a non-zero value for $P(+++ | A_0B_0C_0)$. Let's consider this specific component to be the initial one: $P_{AB}(x_1 x_2|\hat{x}_1\hat{x}_2, \lambda_1)P_{C}(x_3 | \hat{x}_3, \lambda_1)$. Consequently, $P_{AB}(++|A_0B_0, \lambda_1)P_{C}(+|C_0, \lambda_1) \neq 0$. The condition $P_{BC}(++ | B_1C_0) = 0$ in equation (\ref{eq:cab3}) leads to the conclusion that $P_{B}(+|B_1, \lambda_1)P_{C}(+|C_0, \lambda_1)$ must be zero. This, in turn, implies that $P_{B}(+|B_1, \lambda_1) = 0$, given that $P_{C}(+|C_0, \lambda_1)$ cannot be zero. Therefore, we deduce that $P_{B}(-|B_1, \lambda_1) = 1$. Now as the NS principle is assumed at the level of distribution $P_{AB}(x_1 x_2| \hat{x}_1 \hat{x}_2, \lambda_1)$, a deterministic marginal of $P_{AB}(x_1 x_2| \hat{x}_1 \hat{x}_2, \lambda_1)$ indicates that $P_{AB}(x_1 x_2|\hat{x}_1\hat{x}_2, \lambda_1) = P_{A}(x_1 |\hat{x}_1, \lambda_1)P_{B}(x_2|\hat{x}_2, \lambda_1)$. Similarly, all the other components within the decomposition become fully product-based. As a result, the distributions $P_{ABC}(x_1 x_2 x_3 | \hat{x}_1 \hat{x}_2 \hat{x}_3)$ transform into fully local distribution, which contradicts the initial premise considering $\mathcal{S}_3 > 0$.  
\end{proof}

\paragraph{Genuine $N$-partite non-local correlation.---}\label{sec2}

Let's consider a scenario where $N$ subsystems are distributed among $N$ separate parties $\mathcal{A}_1$, $\mathcal{A}_2$, $\ldots$, and $\mathcal{A}_N$. Each party, $\mathcal{A}_i$, can measure one of two observables: either $\hat{u}_i$ or $\hat{v}_i$, on their local subsystem. The measurement outcomes, denoted as $x_i$, can take values from 1 to $d_i$, where $d_i$ represents the dimension of the Hilbert space associated with the $i$-th subsystem $\mathcal{A}_i$. We are interested in examining the joint probabilities $P(x_1,x_2,\dots,x_N|\hat{x}_1,\hat{x}_2,\dots,\hat{x}_N)$, where $\hat{x}_i$ can be either $\hat{u}_i$ or $\hat{v}_i$. Let us consider the following set of conditions:

\begin{equation}\label{eq:gcabello}
\begin{split}
&P_{A_{i}A_{i+1}}(v_r, 1|\hat{v}_i, \hat{u}_{i+1}) = 0,\\
&\forall r, v_r \neq d_i.    
\end{split}   
\end{equation}

We consider the subsystem $\mathcal{A}_{N+1}$ to be $\mathcal{A}_1$. Similar to the tripartite case, we define $p_N \equiv P(1, 1, \dots, 1|\hat{u}_1, \hat{u}_2, \dots, \hat{u}_N)$, $q_N \equiv P(d_1, d_2, \dots, d_N|\hat{v}_1, \hat{v}_2, \dots, \hat{v}_N)$ and their difference to be $\mathcal{S}_N:=p_N-q_N$. We will show that only non-local correlation can satisfy (\ref{eq:gcabello}) with $\mathcal{S}_N > 0$ and furthermore, this non-locality is genuine.

\begin{thm}\label{maintheorem}
If a non-local correlation satisfies the conditions in (\ref{eq:gcabello}) with $\mathcal{S}_N > 0$, then it is a genuine multipartite non-local correlation.
\end{thm}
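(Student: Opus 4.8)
The plan is to mirror the structure of the proof of Theorem~\ref{theorem1}, handling the two claims (non-locality, then genuine non-locality) separately, but now with the bookkeeping adapted to $N$ parties, arbitrary outcome dimensions $d_i$, and the cyclic chain of constraints in \eqref{eq:gcabello}. First I would show that no local model is compatible with $\mathcal{S}_N > 0$. Writing $P(x_1,\dots,x_N|\hat x_1,\dots,\hat x_N) = \sum_{\lambda\in\Omega}\nu(\lambda)\prod_{j=1}^N p(x_j|\hat x_j,\lambda)$, the assumption $p_N > 0$ forces a nonempty set $\Lambda \subseteq \Omega$ on which $p(1|\hat u_j,\lambda) = 1$ for every $j$. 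The key propagation step is the constraint $P_{A_iA_{i+1}}(v_r,1|\hat v_i,\hat u_{i+1}) = 0$ for all $v_r \neq d_i$: since $p(1|\hat u_{i+1},\lambda) = 1$ on $\Lambda$, the factorized marginal gives $p(v_r|\hat v_i,\lambda)\cdot 1 = 0$ for every outcome $v_r \neq d_i$, hence $p(d_i|\hat v_i,\lambda) = 1$ on $\Lambda$. Running this around the cycle $i = 1,\dots,N$ (using $\mathcal{A}_{N+1} = \mathcal{A}_1$, though the cyclicity is not even needed here since each $i$ is handled independently) yields $p(d_i|\hat v_i,\lambda) = 1$ for all $i$ and all $\lambda\in\Lambda$, so $P(d_1,\dots,d_N|\hat v_1,\dots,\hat v_N,\Lambda) = P(1,\dots,1|\hat u_1,\dots,\hat u_N)$, and $\Lambda\subseteq\Omega$ gives $q_N \geq p_N$, i.e.\ $\mathcal{S}_N \leq 0$, a contradiction.

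Next I would rule out a biseparable decomposition. The $N$-party analogue of \eqref{eq:bi_equation} is a convex combination over all bipartitions $\{G,\bar G\}$ of terms $P_{G}(\,\cdot\,|\,\cdot\,,\lambda)P_{\bar G}(\,\cdot\,|\,\cdot\,,\lambda)$, with the NS principle imposed at the level of each group so that single-party marginals $p(x_i|\hat x_i,\lambda)$ are well defined. Since $\mathcal{S}_N > 0$ implies $p_N > 0$, at least one term in the decomposition assigns nonzero weight to outcome $(1,\dots,1)$ on inputs $(\hat u_1,\dots,\hat u_N)$; fix such a term with bipartition $\{G,\bar G\}$ and weight $\lambda$. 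I would then show this term's conditional distribution is fully product across all $N$ parties, contradicting biseparability (which requires $G,\bar G$ both nonempty and the term product \emph{only} across the $G\mid\bar G$ cut — so showing it factorizes completely shows the whole convex combination is in fact local). The mechanism: because the constraints in \eqref{eq:gcabello} form a \emph{single cycle} through all $N$ parties, the cut $\{G,\bar G\}$ must "break" the cycle at (at least) two edges, say edge $(i,i+1)$ with $i\in G, i+1\in\bar G$ and some edge $(k,k+1)$ crossing back. At a crossing edge $(i,i+1)$, the constraint $P_{A_iA_{i+1}}(v_r,1|\hat v_i,\hat u_{i+1}) = 0$ involves parties on opposite sides, so it factorizes as $p(v_r|\hat v_i,\lambda)\,p(1|\hat u_{i+1},\lambda) = 0$; combined with $p(1|\hat u_{i+1},\lambda) \neq 0$ (which follows from the term contributing to $p_N$, propagated along the chain as in part one), this forces $p(d_i|\hat v_i,\lambda) = 1$, a deterministic marginal. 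A deterministic marginal on $\hat v_i$ for party $\mathcal{A}_i$ means, via the NS assumption at the group level, that $\mathcal{A}_i$ decouples from the rest of its group, i.e.\ that group's distribution splits off $\mathcal{A}_i$ as an independent factor. Iterating this peeling around the cycle collapses the whole term into a fully factorized distribution over all $N$ parties; doing this for every contributing term shows the purported biseparable model is actually local, contradicting $\mathcal{S}_N > 0$ via part one.

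The main obstacle I anticipate is the combinatorial/topological argument in the second part: unlike the tripartite case where there are only three bipartitions and the cycle is a triangle, for general $N$ one must argue cleanly that \emph{any} nontrivial bipartition of a cycle crosses at least two edges, and that the propagation of "$p(1|\hat u_{i+1},\lambda)\neq 0$" along intra-group chains (needed to activate the determinism at a crossing edge) actually reaches a crossing edge. One has to be careful: a contributing term gives $p(1|\hat u_j,\lambda) = 1$ only after propagating from the known nonzero outcome, and the propagation $p(1|\hat u_j,\lambda)=1 \Rightarrow p(d_{j}|\hat v_{j},\lambda)=1 \Rightarrow$ (next constraint) needs the constraints to be available within the relevant group or across the cut. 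I would organize this by first establishing that in \emph{any} term contributing to $p_N>0$ one has $p(1|\hat u_j,\lambda)=1$ for all $j$ by the same cyclic propagation as in part one (this works regardless of the bipartition, since each constraint \eqref{eq:gcabello} either factorizes — if it crosses the cut — or is a genuine joint marginal of a group, and in both cases $p(1|\hat u_{i+1},\lambda)=1$ forces $p(d_i|\hat v_i,\lambda)=1$). Once every party has $p(d_i|\hat v_i,\lambda)=1$, every such deterministic single-party marginal detaches that party from its group under NS, so the term is fully product; hence the model is local, contradiction. The rest is routine bookkeeping analogous to Theorem~\ref{theorem1}.
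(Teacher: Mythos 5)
Your overall strategy is the paper's: part one is verbatim the paper's locality argument, and the peeling mechanism you describe in your second paragraph (a crossing edge of the cycle forces a deterministic $\hat v_i$--marginal, NS then detaches $\mathcal{A}_i$ from its group, and the detachment propagates backward along each arc until the term is fully product, reducing biseparability to the already-excluded local case) is exactly how the paper proves genuineness. Your explicit observation that any nontrivial bipartition of the cycle crosses at least two edges, and your arc-by-arc treatment of non-contiguous cuts, is in fact slightly more careful than the paper, which only writes out the contiguous cut $(1,\dots,m)\mid(m+1,\dots,N)$ and asserts the rest is "the same for all cuts."

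However, the reorganization you propose in your final paragraph --- first establish $p(1|\hat u_j,\lambda)=1$ for \emph{all} $j$ in every contributing term, then detach all parties at once --- does not work, and since you present it as your preferred way around the anticipated obstacle, this is a genuine gap. A term contributing to $p_N>0$ only gives $P_G(1,\dots,1|\hat u,\lambda)>0$, hence $p(1|\hat u_j,\lambda)>0$ by marginalization; nothing forces these marginals to equal $1$, and the constraints \eqref{eq:gcabello} never produce determinism for the $\hat u$ settings, only for the $\hat v$ settings. With merely $p(1|\hat u_{i+1},\lambda)>0$, an edge $(i,i+1)$ \emph{internal} to a group does not yield $p(d_i|\hat v_i,\lambda)=1$: the vanishing of the joint two-party marginal $P_{A_iA_{i+1}}(v_r,1|\hat v_i,\hat u_{i+1},\lambda)$ is compatible with both single-party marginals being nonzero if the group's no-signalling box anticorrelates the outcomes (e.g.\ $P(v_r,1)=P(d_i,1)=0$ arranged by perfect anticorrelation while each marginal is $1/2$). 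This is precisely why the order of peeling matters: determinism can only be extracted at an edge whose "downstream" party $\mathcal{A}_{i+1}$ either lies across the cut or has already been detached, so that the constraint genuinely factorizes before you divide by the nonzero $\hat u_{i+1}$--marginal. Your second-paragraph version respects this and is correct; your "organized" version should be discarded in favor of it.
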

\begin{proof}
We will first show that no local theory can satisfy conditions (\ref{eq:gcabello}) with $\mathcal{S}_N > 0$. Let us define a local hidden variable $\lambda$ that captures the full description of the system. This $\lambda$ takes values from a set $\Omega$. Moreover, we assert that the comprehensive depiction of the joint system is denoted by $\nu(\lambda)$. In a local description, there exist conditional probabilities $p_{A_{j}}(u_j | \hat{u}_j, \lambda)$ and $p_{A_{j}}(v_j | \hat{v}_j, \lambda)$ such that the joint probability can be expressed as follows: $P(x_1, x_2, \dots, x_N | \hat{x}_1, \hat{x}_2, \dots, \hat{x}_N) = \sum_{\lambda \in \Omega}\nu(\lambda) \prod_{j=1}^{N} p_{A_{j}}(x_j | \hat{x}_j, \lambda)$, where $\hat{x}_j \in \{\hat{u}_j, \hat{v}_j\}$. Now, let's assume that $\mathcal{S}_N > 0$, indicating that $p_N > 0$. This implies the existence of a nonempty subset $\Lambda \subseteq \Omega$, such that for all $\lambda \in \Lambda$, $P_{A_{i}}(1|\hat{u}_i,\lambda)=1$ for all $i$. Consequently, for all $r$ where $v_r \neq d_i$, we have $P_{A_{i}}(v_r|\hat{v}_i, \lambda) = 0$, which further implies $P(d_i|\hat{v}_i, \lambda) = 1$. Therefore, we can deduce that $P(d_1, d_2, \dots, d_N|\hat{v}_1, \hat{v}_2, \dots, \hat{v}_N, \Lambda) = P(1, 1, \dots, 1|\hat{u}_1, \hat{u}_2, \dots, \hat{u}_N) = p_N$. However, since $\Lambda \subseteq \Omega$, it follows that $P(d_1, d_2, \dots, d_N|\hat{v}_1, \hat{v}_2, \dots, \hat{v}_N, \Lambda) \leq P(d_1, d_2, \dots, d_N|\hat{v}_1, \hat{v}_2, \dots, \hat{v}_N)$. In other words, $\mathcal{S}_N \leq 0$, which contradicts the initial assumption of $\mathcal{S}_N$. Therefore, we can conclude that if any correlation satisfies (\ref{eq:gcabello}) and gives $\mathcal{S}_N > 0$, then it must be non-local.

 Now we will show the genuineness of the non-locality. Consider a $N$-partite probability distribution $P(x_1, x_2, \dots, x_N | \hat{x}_1, \hat{x}_2, \dots, \hat{x}_N)$ satisfying conditions in (\ref{eq:gcabello}) with $\mathcal{S}_N >0$, which is not genuine. Hence, it is a convex mixture of bi-local correlation. Let's consider a partition $(1, 2, \dots, m)$ versus $(m + 1, m + 2, \dots, N)$. Consequently, there must be at least one element within the convex combination that contributes non-zero value to the condition $P(1, 1, \dots, 1|\hat{u}_1, \hat{u}_2, \dots, \hat{u}_N) > 0$, given $p_N > q_N \geq 0$. Let's assume such an element takes the form {\small $P_1(x_1, \dots, x_m|\hat{x}_1, \dots,\hat{x}_m,\lambda)P_2(x_{m+1}, \dots, x_N|\hat{x}_{m+1}, \dots, \hat{x}_N,\lambda)$}. As $p_N > 0$ we must have, 

{\small\begin{equation}\label{eq:1}
\begin{split}
P_1(1,\dots,1|\hat{u}_1,\dots, \hat{u}_m,\lambda)P_2(1, \dots, 1|\hat{u}_{m+1},\dots,\hat{u}_N,\lambda) > 0
\end{split}  
\end{equation}}

From the equation (\ref{eq:gcabello}), we have $P_{A_{m}A_{m+1}}(v_m \neq d_m, 1|\hat{v}_m, \hat{u}_{m+1}) = 0$ and $P_{A_{N}A_{1}}(v_N \neq d_N, 1|\hat{v}_N, \hat{u}_{1}) = 0$. Hence, we have two important relations,

{\small \begin{equation}\label{eq:2}
\begin{split}
&P_1(v_m \neq d_m|\hat{v}_m,\lambda)P_2(1|\hat{u}_{m+1},\lambda) = 0,\\
&P_2(v_N \neq d_N |\hat{v}_N,\lambda)P_1(1|\hat{u}_1,\lambda) = 0.
\end{split}  
\end{equation}}

The form of $P_1$: Since by (\ref{eq:1}), we have $P_2(1, \dots, 1|\hat{u}_{m+1},\dots,\hat{u}_N,\lambda) \neq 0$, one definitely have $P_2(1|\hat{u}_{m+1},\lambda) \neq 0$. We also get $P_1(v_m \neq d_m|\hat{v}_m,\lambda)P_2(1|\hat{u}_{m+1},\lambda) = 0$ from equation (\ref{eq:2}). Hence, we get $P_1(v_m \neq d_m|\hat{v}_m,\lambda) = 0$ and $P_1(d_m|\hat{v}_m,\lambda) = 1$. Now as the NS principle is assumed at the level of distribution $P_1(x_1 x_2 \dots x_m| \hat{x}_1 \hat{x}_2 \dots \hat{x}_m, \lambda)$, a deterministic marginal of $P_1(x_1 x_2 \dots x_m| \hat{x}_1 \hat{x}_2 \dots \hat{x}_m, \lambda)$ indicates that $P_1(x_1 x_2 \dots x_m| \hat{x}_1 \hat{x}_2 \dots \hat{x}_m, \lambda)$ factorizes to certain $P'_1(x_1,\dots,x_{m-1}|\hat{x}_1,\dots,\hat{x}_{m-1}, \lambda)P^m_1(x_m|\hat{x}_m, \lambda)$.

From the equation (\ref{eq:gcabello}), we also have $P_{A_{m-1}A_{m}}(v_{m-1} \neq d_{m-1}, 1|\hat{v}_{m-1}, \hat{u}_{m}) = 0$. As $P_1(x_1 x_2 \dots x_m| \hat{x}_1 \hat{x}_2 \dots \hat{x}_m, \lambda)$ factorizes we have 
$P'_1(v_{m-1} \neq d_{m-1}|\hat{v}_{m-1}, \lambda)P^m_1(1|\hat{u}_m, \lambda) = 0$,
however again using equation (\ref{eq:1}) we have $P^m_1(1|\hat{u}_m, \lambda) \neq 0$ and it implies that $P'_1(d_{m-1}|\hat{v}_{m-1}, \lambda) = 1$. Using the same argument as above $P'_1$ factorizes. Hence eventually it is easy to see that $P_1$ fully factorizes.

The form of $P_2$: We have $P_1(1|\hat{u}_1, \lambda) \neq 0$ and $P_2(v_N \neq d_N|\hat{v}_N, \lambda)P_1(1|\hat{u}_1, \lambda) = 0$ from equation (\ref{eq:1}) and (\ref{eq:2}) respectively. Hence we have  $P_2(v_N \neq d_N|\hat{v}_N, \lambda) = 0$, thus $P_2(d_N|\hat{v}_N, \lambda) = 1$. Using the same argument for factorization of $P_1$, $P_2$ also fully factorizes.

Thus the distribution {\small $P_1(x_1, \dots, x_m|\hat{x}_1, \dots,\hat{x}_m,\lambda)P_2(x_{m+1}, \dots, x_N|\hat{x}_{m+1}, \dots, \hat{x}_N,\lambda)$} fully factorizes. Since the proof is the same for all cuts, the distribution $P(x_1, x_2, \dots, x_N | \hat{x}_1, \hat{x}_2, \dots, \hat{x}_N)$ becomes fully local. Such probability distribution cannot satisfy the equation (\ref{eq:gcabello}) for $\mathcal{S}_N > 0$. Hence we are done.
\end{proof}

\paragraph{Self-testing of genuine tripartite non-local states.---}\label{sec3}

We will now demonstrate self-testing for the tripartite case. Alice, Bob, and Charlie each choose measurements $\hat{x}_1 \in \{A_0, A_1\}$, $\hat{x}_2 \in \{B_0, B_1\}$, and $\hat{x}_3 \in \{C_0, C_1\}$, respectively. Their binary outcomes are denoted as $x_1 \in \{\pm 1\}$, $x_2 \in \{\pm 1\}$, and $x_3 \in \{\pm 1\}$. We define a tripartite correlation $P(x_1, x_2, x_3 | \hat{x}_1, \hat{x}_2, \hat{x}_3)$ that satisfies the equality constraint \ref{eq:cab3}. To illustrate genuine non-locality, consider that Alice, Bob, and Charlie share the following general pure three-qubit state,

\begin{equation}\label{eq:7}
\begin{split}
&|\Psi\rangle_{ABC}\\ &= a_{000}|000\rangle + a_{001}|001\rangle + a_{010}|010\rangle + a_{100}|100\rangle \\ & + a_{011}|011\rangle + a_{101}|101\rangle + a_{110}|110\rangle + a_{111}|111\rangle,
\end{split}  
\end{equation} 

such that $\sum_{i,j,k\in\{0,1\}}|a_{ijk}|^2=1$. We consider $\hat{x}_1 = |u_{x_1}^{+}\rangle\langle u_{x_1}^{+}| - |u_{x_1}^{-}\rangle\langle u_{x_1}^{-}|$, $\hat{x}_2 = |v_{x_2}^{+}\rangle\langle v_{x_2}^{+}| - |v_{x_2}^{-}\rangle\langle v_{x_2}^{-}|$ and $\hat{x}_3 = |w_{x_3}^{+}\rangle\langle w_{x_3}^{+}| - |w_{x_3}^{-}\rangle\langle w_{x_3}^{-}|$ to be the measurement of Alice, Bob and Charlie respectively. We can choose the following measurements with $\hat{x}_1\in\{A_0, A_1\}$ and $\hat{x}_2\in\{B_0, B_1\}$ and $\hat{x}_3\in\{C_0, C_1\}$ as,

{\small \begin{equation}\label{measure}
\begin{split}
A_0 &\equiv \left\{
\begin{array}{l}
|u_{A_0}^+ \rangle = |0\rangle,\\
|u_{A_0}^- \rangle = |1\rangle
\end{array}\right\},
\quad \\
A_1 &\equiv \left\{
\begin{array}{l}
|u_{A_1}^+ \rangle = \cos(\frac{\alpha_1}{2}) |0\rangle + e^{i\phi_1}\sin(\frac{\alpha_1}{2}) |1\rangle,\\
|u_{A_1}^- \rangle = -\sin(\frac{\alpha_1}{2}) |0\rangle + e^{i\phi_1}\cos(\frac{\alpha_1}{2}) |1\rangle
\end{array}\right\},
\quad \\
B_0 &\equiv \left\{
\begin{array}{l}
|v_{B_0}^+ \rangle = |0\rangle,\\
|v_{B_0}^- \rangle = |1\rangle
\end{array}\right\},
\quad \\
B_1 &\equiv \left\{
\begin{array}{l}
|v_{B_1}^+ \rangle = \cos(\frac{\alpha_2}{2}) |0\rangle + e^{i\phi_2}\sin(\frac{\alpha_2}{2}) |1\rangle,\\
|v_{B_1}^- \rangle = -\sin(\frac{\alpha_2}{2}) |0\rangle + e^{i\phi_2}\cos(\frac{\alpha_2}{2}) |1\rangle 
\end{array}\right\},
\quad \\
C_0 &\equiv \left\{
\begin{array}{l}
|w_{C_0}^+ \rangle = |0\rangle,\\
|w_{C_0}^- \rangle = |1\rangle
\end{array}\right\},
\quad \\
C_1 &\equiv \left\{
\begin{array}{l}
|w_{C_1}^+ \rangle = \cos(\frac{\alpha_3}{2}) |0\rangle + e^{i\phi_3}\sin(\frac{\alpha_3}{2}) |1\rangle,\\
|w_{C_1}^- \rangle = -\sin(\frac{\alpha_3}{2}) |0\rangle + e^{i\phi_3}\cos(\frac{\alpha_3}{2}) |1\rangle
\end{array}\right\}    
\end{split}    
\end{equation}}

where $0 < \alpha_1, \alpha_2, \alpha_3 < \pi$ and $0 \leq \phi_1, \phi_2, \phi_3 < 2\pi$. Without loss of generality we can choose observable $A_0 = B_0 = C_0= \sigma_z \equiv |0\rangle\langle 0| - |1\rangle\langle 1|$ as long as we consider the general form of the state $|\Psi\rangle_{ABC}$ and of the three observables $A_1$, $B_1$, $C_1$.

The class of pure three-qubit state that satisfies the condition (\ref{eq:cab3}) is of the form:

{\small \begin{equation}
\begin{split}
&|\Psi\rangle_{ABC}\\
& = a|000\rangle - a(e^{i\phi_3}\cot\frac{\alpha_3}{2} |001\rangle + e^{i\phi_2}\cot\frac{\alpha_2}{2} |010\rangle \\
& + e^{i\phi_1}\cot\frac{\alpha_1}{2}|100\rangle) + a(e^{i(\phi_2+\phi_3)}\cot\frac{\alpha_2}{2}\cot\frac{\alpha_3}{2}|011\rangle \\
& + e^{i(\phi_1+\phi_3)}\cot\frac{\alpha_1}{2}\cot\frac{\alpha_3}{2}|101\rangle + e^{i(\phi_1+\phi_2)}\cot\frac{\alpha_1}{2}\cot\frac{\alpha_2}{2}|110\rangle) \\ 
& + e^{i\delta} 
\sqrt{\begin{split}
&1-a^2(1+\cot^2\frac{\alpha_1}{2}+\cot^2\frac{\alpha_2}{2} \\
& +\cot^2\frac{\alpha_3}{2} +\cot^2\frac{\alpha_2}{2}\cot^2\frac{\alpha_3}{2} \\
& +\cot^2\frac{\alpha_1}{2}\cot^2\frac{\alpha_3}{2}+\cot^2\frac{\alpha_1}{2}\cot^2\frac{\alpha_2}{2})
\end{split}}|111\rangle
\end{split}
\end{equation}}
where $a$ and $\delta$ satisfy $0 \leq a \leq 1$ and $0 \leq \delta < 2\pi$. Then, maximizing $\mathcal{S}_3$ over these parameters yields, $\frac{1}{8} \left(2 \left(7 a \sqrt{1-7 a^2}-17 a^2\right)-1\right) \approx 0.02035$ for $\alpha_1 = \alpha_2 = \alpha_3 = \frac{\pi}{2}$, $a = \frac{1}{2} \sqrt{\frac{316-17 \sqrt{158}}{1106}} \approx 0.15207$ and $\delta = \phi_1+\phi_2+\phi_3$. The maximum value of $\mathcal{S}_3$ is attainable by the following pure three-qubit state,


\begin{equation*}
\begin{split}
&|\Psi_{max}^{\star 3}\rangle_{ABC}\\
&= a|000\rangle - a(e^{i\phi_3}|001\rangle + e^{i\phi_2} |010\rangle + e^{i\phi_1}|100\rangle)\\
& + a(e^{i(\phi_2+\phi_3)}|011\rangle  + e^{i(\phi_1+\phi_3)}|101\rangle + e^{i(\phi_1+\phi_2)}|110\rangle) \\ 
& + e^{i\delta} 
\sqrt{1-7a^2}|111\rangle
\end{split}
\end{equation*}

We will first show that the maximum achievable value of $\mathcal{S}_3$ over $\mathbb{C}^n \otimes \mathbb{C}^n \otimes \mathbb{C}^n$ is same as over $\mathbb{C}^2 \otimes \mathbb{C}^2 \otimes \mathbb{C}^2$. Further, we will see that the quantum state $|\Psi_{max}^{\star 3}\rangle_{ABC}$ can be self-tested.

\begin{thm}\label{optimal3} 
 The maximum achievable value of $\mathcal{S}_3$ among all three-qubit states represents the optimal value attainable within tri-partite quantum states of any finite dimension.
\end{thm}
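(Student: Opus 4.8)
\emph{Proof plan.} The idea is a dimension-reduction argument via Jordan's lemma: any correlation realizable on $\mathbb{C}^n\otimes\mathbb{C}^n\otimes\mathbb{C}^n$ while obeying (\ref{eq:cab3}) can be decomposed into a convex mixture of correlations realizable on three qubits, so $\mathcal{S}_3$ can never exceed its three-qubit maximum, and the reverse inequality is trivial since qubit systems are a special case.

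First I would reduce to the convenient case of pure states and projective measurements. Since the three equality constraints in (\ref{eq:cab3}) and the functional $\mathcal{S}_3=p_3-q_3$ are all linear in the shared state $\rho_{ABC}$, for any fixed choice of measurements the optimum over the (convex) set of admissible states is attained at an extreme point, i.e.\ a pure state $|\Psi\rangle$; and by Naimark dilation each party's two measurements may be taken projective, $\hat{x}_i=\Pi^{+}_{\hat{x}_i}-\Pi^{-}_{\hat{x}_i}$ with $\Pi^{+}+\Pi^{-}=I$, the dilating ancillas merely enlarging the local dimension, which is permitted.

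Next, apply Jordan's lemma separately on each party: the pair of projectors $\{\Pi^{+}_{A_0},\Pi^{+}_{A_1}\}$ on $\mathcal{A}_1$'s space admits a common block decomposition $\bigoplus_i V^A_i$ with $\dim V^A_i\le 2$, and likewise $\bigoplus_j V^B_j$ and $\bigoplus_k V^C_k$; padding any one-dimensional block up to a qubit, write $|\Psi\rangle=\sum_{ijk}\sqrt{p_{ijk}}\,|\psi_{ijk}\rangle$ with $|\psi_{ijk}\rangle$ a normalized three-qubit state supported on $V^A_i\otimes V^B_j\otimes V^C_k$ and $\sum_{ijk}p_{ijk}=1$. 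The \emph{key observation} is that every operator appearing in (\ref{eq:cab3}) and in $p_3,q_3$ — for instance $\Pi^{+}_{A_1}\otimes\Pi^{+}_{B_0}\otimes I_C$ — is block-diagonal with respect to this tripartite decomposition and positive semidefinite. Hence $\langle\Psi|\Pi^{+}_{A_1}\otimes\Pi^{+}_{B_0}\otimes I_C|\Psi\rangle=\sum_{ijk}p_{ijk}\langle\psi_{ijk}|\Pi^{+}_{A_1}\otimes\Pi^{+}_{B_0}\otimes I_C|\psi_{ijk}\rangle=0$ forces every nonnegative summand to vanish, so each block state $|\psi_{ijk}\rangle$ with $p_{ijk}>0$ individually satisfies all three constraints of (\ref{eq:cab3}); and by the same block-diagonality $\mathcal{S}_3(|\Psi\rangle)=\sum_{ijk}p_{ijk}\,\mathcal{S}_3(|\psi_{ijk}\rangle)$.

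Finally, conclude: $\mathcal{S}_3(|\Psi\rangle)$ is a convex combination of $\mathcal{S}_3$-values of admissible three-qubit configurations, each bounded by the three-qubit maximum computed above ($\approx 0.02035$), hence so is $\mathcal{S}_3(|\Psi\rangle)$; since this value is attained by $|\Psi_{max}^{\star 3}\rangle_{ABC}$, it is the optimum over tripartite quantum states of every finite dimension. The main thing to get right — the only genuine subtlety — is the Jordan-lemma bookkeeping: carefully justifying the reduction to projective measurements and pure states; handling one-dimensional Jordan blocks (a party with a locally deterministic outcome, padded to a qubit, or discarded since by Theorem~\ref{theorem1} such a biseparable block contributes $\mathcal{S}_3\le 0$) as well as degenerate two-dimensional blocks where the two measurements coincide (there the first constraint of (\ref{eq:cab3}) already forces $p_3=0$, hence $\mathcal{S}_3\le 0$, so these do not affect the bound); and verifying that the simultaneous $A/B/C$ block decomposition really does render every relevant observable block-diagonal, so that both the equalities and $\mathcal{S}_3$ split as claimed.
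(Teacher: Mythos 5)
Your proposal is correct and follows essentially the same route as the paper: Naimark dilation to projective measurements, a Jordan-lemma block decomposition into (at most) qubit blocks on each party, and the observation that the zero constraints and $\mathcal{S}_3$ both split across blocks so the optimum is a convex combination of three-qubit values. Your write-up is in fact more careful than the paper's sketch about why each block inherits the constraints (positivity of the vanishing terms) and about the degenerate one-dimensional blocks, but the underlying argument is the same.
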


\begin{proof}
Here we only present the outline of the proof as the details are quite similar to the proof given in \cite{RZS12}.

Consider a general tripartite state $\rho$ shared among Alice, Bob, and Charlie. The operator $\Pi_{x|\hat{x}}$ represents the outcome $x$ obtained by Alice when she measures the observable $\hat{x}$. Similarly, we define $\Pi_{y|\hat{y}}$ and $\Pi_{z|\hat{z}}$ for Bob and Charlie respectively. Thus, 

\begin{equation}
\begin{split}
P(x,y,z|\hat{x},\hat{y},\hat{z}) = \operatorname{Tr}[\rho (\Pi_{x|\hat{x}} \otimes \Pi_{y|\hat{y}} \otimes \Pi_{z|\hat{z}})].    
\end{split}
\end{equation}

Considering no constraints on dimensionality, we employ Neumark's dilation theorem, restricting our focus to projective measurements. Thus, the observables associated with Alice, Bob, and Charlie consist of Hermitian operators having eigenvalues of $\pm 1$, depicted as:

\begin{equation}
\begin{split}
\hat{x} &= (+1) \Pi_{+|\hat{x}} + (-1) \Pi_{-|\hat{x}}, \quad \hat{x} \in \{A_0, A_1\} \\
\hat{y} &= (+1) \Pi_{+|\hat{y}} + (-1) \Pi_{-|\hat{y}}, \quad \hat{y} \in \{B_0, B_1\}\\
\hat{z} &= (+1) \Pi_{+|\hat{z}} + (-1) \Pi_{-|\hat{z}}, \quad \hat{z} \in \{C_0, C_1\}   
\end{split}
\end{equation}

Now using the lemma stated in \cite{Mas06} applies to Alice's observable $\{A_0, A_1\}$ inducing a decomposition $H_A = \bigoplus_i H_{A}^i$, Bob's observable $\{B_0, B_1\}$ inducing a decomposition $H_B = \bigoplus_j H_{B}^j$, and Charlie's observable $\{C_0, C_1\}$ inducing a decomposition $H_C = \bigoplus_k H_{C}^k$. Consequently, the following expression is derived:

{\small \begin{equation}
\begin{split}
P(x,y,z|\hat{x},\hat{y},\hat{z}) &= \sum_{i,j,k} \lambda_{ijk} \operatorname{Tr}[\rho_{ijk} (\Pi_{x|\hat{x}} \otimes \Pi_{y|\hat{y}} \otimes \Pi_{z|\hat{z}})] \\
&\equiv \sum_{i,j,k} \lambda_{ijk} P_{ijk}(x,y,z|\hat{x},\hat{y},\hat{z})
\end{split}
\end{equation}}

where $\rho_{ijk} = \frac{\Pi_i \otimes \Pi_j \otimes \Pi_k \rho \Pi_i \otimes \Pi_j \otimes \Pi_k}{\lambda_{ijk}}$ is, at most, a three-qubit state. The coefficients $\lambda_{ijk}$ are determined as $\lambda_{ijk} = \operatorname{Tr}[\rho (\Pi_i \otimes \Pi_j \otimes \Pi_k)]$, with $\lambda_{ijk} \geq 0$ $\forall i,j,k$ and $\sum_{i,j,k} \lambda_{ijk} = 1$.

If the joint probability $P(x,y,z|\hat{x},\hat{y},\hat{z})$ meets the conditions outlined in (\ref{eq:cab3}), then it implies that the joint probability $P_{ijk}(x,y,z|\hat{x},\hat{y},\hat{z})$ will also adhere to the constraint equations. Consequently, the maximum attainable value of $\mathcal{S}_3$ across all three-qubit states denotes the optimal value achievable within tri-partite quantum states.
 
\end{proof}

\begin{thm}\label{self3}
If the maximum value of $\mathcal{S}_3$ is observed, then the state of the system is equivalent up to local isometries
to $|\sigma\rangle_{ABC} \otimes |\Psi_{max}^{\star 3}\rangle_{A'B'C'}$, where $|\Psi_{max}^{\star 3}\rangle$ attains the maximum value of $\mathcal{S}_3$  and $|\sigma\rangle$ is an arbitrary tripartite state.
\end{thm}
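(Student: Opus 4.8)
The plan is to use Theorem~\ref{optimal3} to collapse the problem to three-qubit states, pin down the optimal three-qubit configuration up to local unitaries, and repackage the residual freedom as the junk state $\ket{\sigma}$. Concretely, purify the shared state and apply the block decomposition from the proof of Theorem~\ref{optimal3}: Neumark dilation and the Masanes-type lemma induce $H_A=\bigoplus_i H_A^i$, $H_B=\bigoplus_j H_B^j$, $H_C=\bigoplus_k H_C^k$ and a convex splitting $P=\sum_{ijk}\lambda_{ijk}P_{ijk}$, where each $\rho_{ijk}$ is supported on $\mathbb{C}^2\otimes\mathbb{C}^2\otimes\mathbb{C}^2$. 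Since every constraint in \eqref{eq:cab3} forces a probability to $0$ and all $\lambda_{ijk}\ge 0$, each $P_{ijk}$ with $\lambda_{ijk}>0$ inherits \eqref{eq:cab3}; and since $\mathcal{S}_3$ is affine in the behaviour, $\mathcal{S}_3=\sum_{ijk}\lambda_{ijk}\,\mathcal{S}_3(P_{ijk})$, so attaining the global optimum forces $\mathcal{S}_3(P_{ijk})$ to equal the three-qubit maximum for every block in the support. Linearity of $\mathcal{S}_3$ in the state, applied again to a spectral decomposition of each $\rho_{ijk}$, then lets us assume every surviving block carries a pure optimal three-qubit state $\ket{\psi_{ijk}}$.

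Next, fix each block. The Masanes-type lemma makes $A_0,A_1$ honest qubit observables on $H_A^i$; using local unitary freedom set $A_0=B_0=C_0=\sigma_z$ and write $A_1,B_1,C_1$ in the form \eqref{measure} with block-dependent parameters $(\alpha_\ell^{(i)},\phi_\ell^{(i)})$. Imposing \eqref{eq:cab3} forces $\ket{\psi_{ijk}}$ into the cotangent-coefficient family exhibited after \eqref{measure}; maximizing $\mathcal{S}_3$ over that family then forces $\alpha_1=\alpha_2=\alpha_3=\frac{\pi}{2}$, the stated value of $a$, and $\delta=\phi_1+\phi_2+\phi_3$, i.e.\ $\ket{\psi_{ijk}}$ equals $\ket{\Psi_{max}^{\star 3}}$ up to the remaining phases $(\phi_1,\phi_2,\phi_3)$, which may still vary with $ijk$. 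Absorbing these by the local unitaries $\mathrm{diag}(1,e^{-i\phi_\ell^{(i)}})$ --- taken block-diagonal so that they remain unitaries on $H_A,H_B,H_C$ --- brings every surviving block to a single canonical reference state $\ket{\Psi_{max}^{\star 3}}$ with $\phi_1=\phi_2=\phi_3=0$.

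Finally, assemble the isometry: on Alice's side take $\Phi_A:H_A\to H_A\otimes\mathbb{C}^2$ which, controlled on the block label $i$ and after the phase correction, extracts the qubit factor of $H_A^i$; define $\Phi_B,\Phi_C$ analogously. Then $(\Phi_A\otimes\Phi_B\otimes\Phi_C)$ on the purified input yields $\ket{\sigma}_{ABC}\otimes\ket{\Psi_{max}^{\star 3}}_{A'B'C'}$, where $\ket{\sigma}$ carries the block labels with amplitudes $\sqrt{\lambda_{ijk}}$, the complementary subspaces, and the purifying register --- an arbitrary tripartite state, as claimed.

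I expect the main obstacle to be two-fold. First, self-testing needs the optimizer to be \emph{unique} up to local unitaries, so one must genuinely verify that within the cotangent family the configuration $\alpha_\ell=\frac{\pi}{2}$, $a$ as stated, $\delta=\sum_\ell\phi_\ell$ is the \emph{only} global maximum --- ruling out competing interior critical points and boundary maxima (e.g.\ as $\alpha_\ell\to 0,\pi$ or $a\to 0$) --- which is the quantitative heart of the argument. Second, one must check that the block-dependent phase corrections and qubit-extraction maps glue into one honest local isometry, so that all of the $\{i,j,k\}$ bookkeeping ends up inside $\ket{\sigma}$ and nothing block-dependent leaks into $\ket{\Psi_{max}^{\star 3}}$.
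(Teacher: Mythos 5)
Your proposal follows essentially the same route as the paper: reuse the block decomposition from Theorem~\ref{optimal3}, argue by convexity that every block with $\lambda_{ijk}>0$ must carry the (unique up to local unitaries) optimal three-qubit state $|\Psi_{max}^{\star 3}\rangle_{ijk}$, and then extract it with a block-structured local isometry, leaving the labels and weights $\sqrt{\lambda_{ijk}}$ in the junk state $|\sigma\rangle$. The two obstacles you flag --- uniqueness of the maximizer within the cotangent family and the gluing of block-dependent phase corrections into a single isometry --- are genuine, but the paper's own proof asserts rather than verifies them as well, so your outline is, if anything, the more explicit account of the same argument.
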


\begin{proof}
 We can choose eigenstates of the observables $A_0$, $B_0$, and $C_0$ to be computational eigenbasis:

\begin{equation*}
\begin{split}
\Pi_{+|A_0}^i &= \vert 2i \rangle \langle 2i \vert, \quad \Pi_{-|A_0}^i = \vert 2i+1 \rangle \langle 2i+1 \vert, \\
\Pi_{+|B_0}^j  &= \vert 2j \rangle \langle 2j \vert, \quad \Pi_{-|B_0}^j  = \vert 2j+1 \rangle \langle 2j+1 \vert,\\
\Pi_{+|C_0}^k &= \vert 2k \rangle \langle 2k \vert, \quad \Pi_{-|C_0}^k = \vert 2k+1 \rangle \langle 2k+1 \vert
\end{split}
\end{equation*}

where $i$, $j$, and $k$ belong to the set $\{0, 1, 2, \dots\}$. Now, the difference $p_{ijk} - q_{ijk}$, where $p_{ijk} = P_{ijk}(+, +, +|A_0, B_0, C_0)$ and $q_{ijk} = P_{ijk}(-, -, -|A_1, B_1, C_1)$, in the subspace $H_A^i \otimes H_B^j \otimes H_C^k$ can attain the maximum value if and only if $\rho_{ijk} = |\Psi_{max}^{\star 3}\rangle_{ijk}\langle \Psi_{max}^{\star 3}|$, where $|\Psi_{max}^{\star 3}\rangle_{ijk}$ represents the three-qubit state. Therefore, the unknown state $|\chi\rangle$ can give the maximum value of $\mathcal{S}_3$, if and only if,
\begin{equation*}
|\chi\rangle = \bigoplus_{i,j,k}\sqrt{\lambda_{ijk}}|\Psi_{max}^{\star 3}\rangle_{ijk}.
\end{equation*}

Hence, if we choose the local isometries in the following way,

\begin{align*}
&\Phi_A = \Phi_B = \Phi_C = \Phi, \\
&\Phi |2m,0\rangle_{XX'} \rightarrow |2m,0\rangle_{XX'}, \\
&\Phi |2m+1,0\rangle_{XX'} \rightarrow |2m,1\rangle_{XX'}, 
\end{align*}
where $XX' \in \{AA', BB', CC'\}$, then we have,

\begin{equation*}
\begin{split}
 &(\Phi_A\otimes\Phi_B\otimes\Phi_C)|\chi\rangle_{ABC}|000\rangle_{A'B'C'} \\&= |\sigma\rangle_{ABC} \otimes |\Psi_{max}^{\star 3}\rangle_{A'B'C'},   
\end{split}
\end{equation*}
where $|\sigma\rangle_{ABC}$ some junk state.
\end{proof}

\paragraph{Self-testing of genuine $N$-partite non-local states.---}\label{sec4}

Now we will provide self-testing of genuine $N$-partite non-local correlations. We will consider binary outcomes $\{+,-\}$ for all observables $\hat{x}_{i}$. So our conditions become:

\begin{equation}\label{eq:3}
\begin{split}
&P_{A_{i}A_{i+1}}(+, +|\hat{v}_i, \hat{u}_{i+1}) = 0,\\
&\forall i \leq N.    
\end{split}   
\end{equation}

We consider $p_N \equiv P(+, +, \dots, +|\hat{u}_1, \hat{u}_2, \dots, \hat{u}_N)$ and $q_N \equiv P(-, -, \dots, -|\hat{v}_1, \hat{v}_2, \dots, \hat{v}_N)$. We have already seen that no local theory will satisfy $\mathcal{S}_N (:=p_N - q_N) > 0$. Let us consider the following general $N$-qubit pure state shared between $N$ parties $\mathcal{A}_1, \mathcal{A}_2, \dots,$ and $\mathcal{A}_N$:

{\small \begin{equation}\label{eq:4}
\begin{split}
&|\Psi\rangle_{A_1...A_N} \\
& = a_{0...0}|0...0\rangle \\
&+ \sum_{\substack{x_i=1 \\ j\neq i, x_j = 0}}a_{x_1,\dots,x_i,\dots,x_N}|x_1,\dots,x_i,\dots,x_N\rangle \\ &+ \sum_{\substack{x_{i_1}=1,x_{i_2}=1 \\ j\neq \{i_1,i_2\}, x_j = 0}}a_{x_1,\dots,x_{i_1},\dots,x_{i_2},\dots,x_N}|x_1,\dots,x_{i_1},\dots,x_{i_2}\dots,x_N\rangle \\ & + \dots \\
&+\sum_{\substack{x_i=0 \\ j\neq i, x_j = 1}}a_{x_1,\dots,x_i,\dots,x_N}|x_1,\dots,x_i,\dots,x_N\rangle \\
&+a_{1...1}|1...1\rangle,
\end{split}  
\end{equation}}

such that $\sum_{i_1,\dots i_N\in\{0,1\}}|a_{i_1,\dots i_N}|^2=1$, and let the projective measurement of $A_i$ be $\hat{x}_i^j = |A_i^{x^j,+}\rangle\langle A_i^{x^j,+}| - |A_i^{x^j,-}\rangle\langle A_i^{x^j,-}|$. One can choose the basis for the measurements $\hat{x}_i^j$ where $j\in\{0, 1\}$ as follows:

{\small \begin{equation}\label{eq:5}
\begin{split}
\hat{x}_i^0 &\equiv \left\{
\begin{array}{l}
|A_i^{x^0,+} \rangle = |0\rangle,\\
|A_i^{x^0,-} \rangle = |1\rangle,
\end{array}\right\}
\quad \\
\hat{x}_i^1 &\equiv \left\{
\begin{array}{l}
|A_i^{x^1,+} \rangle = \cos(\frac{\alpha_i}{2}) |0\rangle + e^{i\phi_i}\sin(\frac{\alpha_i}{2}) |1\rangle,\\
|A_i^{x^1,-} \rangle = -\sin(\frac{\alpha_i}{2}) |0\rangle + e^{i\phi_i}\cos(\frac{\alpha_i}{2}) |1\rangle,
\end{array}\right\}
\end{split}  
\end{equation}}

where $0 < \alpha_i < \pi$ and $0 \leq \phi_i < 2\pi$. Now the state that will satisfy the equation (\ref{eq:3}) has amplitude of the following form:

{\small \begin{equation}\label{eq:6}
\begin{split}
& a_{0...0} = a,\\
& \forall i, x_i=1, j\neq i, x_j = 0,\\
& a_{x_1,\dots,x_i,\dots,x_N} = -a \cdot e^{i\phi_i} \cdot \cot{\frac{\alpha_i}{2}}\\
& \forall i_1, i_2, x_{i_1}=1, x_{i_2}=1 , j\neq \{i_1,i_2\}, x_j = 0,\\
& a_{x_1,\dots,x_{i_1},\dots,x_{i_2},\dots,x_N} = a \cdot e^{i(\phi_{i_1}+\phi_{i_2})} \cdot \cot{\frac{\alpha_{i_1}}{2}}\cdot \cot{\frac{\alpha_{i_2}}{2}},\\
& \dots \dots, \\
& a_{1...1} = e^{i\delta} \cdot 
\sqrt{
\begin{split}
1-&a^2_{0...0}- \sum_{\substack{x_i=1 \\ j\neq i, x_j = 0}}a^2_{x_1,\dots,x_i,\dots,x_N}\\
&-\sum_{\substack{x_{i_1}=1,x_{i_2}=1 \\ j\neq \{i_1,i_2\}, x_j = 0}}a^2_{x_1,\dots,x_{i_1},\dots,x_{i_2},\dots,x_N}\\
&- \dots - \sum_{\substack{x_i=0 \\ j\neq i, x_j = 1}}a^2_{x_1,\dots,x_i,\dots,x_N}
\end{split}
}
\end{split}  
\end{equation}} 

where $a$ and $\delta$ satisfy $0 \leq a \leq 1$ and $0 \leq \delta < 2\pi$.

We will find the maximum value of $\mathcal{S}_N$ using equation (\ref{eq:6}) like we have already found the maximum value for $\mathcal{S}_3$ for triparty case. The result in \cite{Mas06} works for any $N$-party with two input and two output scenarios. Hence, our calculations for triparty case easily generalize for $N$-party also. Hence, we have self-testing of the state which gives the maximum value for $\mathcal{S}_N$.

\begin{thm}\label{selfN}
If the maximum value of $\mathcal{S}_N$ is observed, then the state of the system is equivalent up to local isometries
to $|\sigma_{N}\rangle_{A_{1}\ldots A_{N}} \otimes |\Psi_{max}^{\star N}\rangle_{A'_{1}\ldots A'_{N}}$, where $|\Psi_{max}^{\star N}\rangle$ attains the maximum value of $\mathcal{S}_N$ and $|\sigma_{N}\rangle$ is an arbitrary $N$-partite state.
\end{thm}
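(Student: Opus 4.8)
The plan is to mirror the argument used for Theorem~\ref{self3}, exploiting the fact (noted above) that the Mancinska--Masanes-type lemma of \cite{Mas06} and the dimension-reduction argument of Theorem~\ref{optimal3} both extend verbatim to $N$ parties. First I would apply the lemma to each party $\mathcal{A}_i$: since $\mathcal{A}_i$ holds only the two binary observables $\hat{u}_i,\hat{v}_i$ (Hermitian with eigenvalues $\pm1$ after Neumark dilation), this induces an orthogonal decomposition $H_{A_i}=\bigoplus_{k_i}H_{A_i}^{k_i}$ into subspaces of dimension at most two, each invariant under both observables and within which $\hat{u}_i,\hat{v}_i$ act as qubit observables. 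Writing $\Pi_{k_i}$ for the projector onto $H_{A_i}^{k_i}$ and $\lambda_{\vec{k}}=\operatorname{Tr}[\rho\,(\Pi_{k_1}\otimes\cdots\otimes\Pi_{k_N})]$, the shared state decomposes as $\rho=\sum_{\vec{k}}\lambda_{\vec{k}}\rho_{\vec{k}}$ with each $\rho_{\vec{k}}$ supported on $\bigotimes_i H_{A_i}^{k_i}$, i.e.\ an (at most) $N$-qubit state, and $P(x_1,\dots,x_N|\hat{x}_1,\dots,\hat{x}_N)=\sum_{\vec{k}}\lambda_{\vec{k}}P_{\vec{k}}(x_1,\dots,x_N|\hat{x}_1,\dots,\hat{x}_N)$.

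Next I would push the constraints through the decomposition. Each condition in~(\ref{eq:3}) is of the form $P_{A_iA_{i+1}}(+,+|\hat{v}_i,\hat{u}_{i+1})=0$; since the block projectors act locally this marginal equals $\sum_{\vec{k}}\lambda_{\vec{k}}P_{\vec{k},A_iA_{i+1}}(+,+|\hat{v}_i,\hat{u}_{i+1})$, a sum of nonnegative terms, so every block correlation $P_{\vec{k}}$ individually satisfies~(\ref{eq:3}). Because $p_N=\sum_{\vec{k}}\lambda_{\vec{k}}p_{\vec{k}}$ and $q_N=\sum_{\vec{k}}\lambda_{\vec{k}}q_{\vec{k}}$, we obtain $\mathcal{S}_N=\sum_{\vec{k}}\lambda_{\vec{k}}(p_{\vec{k}}-q_{\vec{k}})$, a convex combination of block values. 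Within block $\vec{k}$ the measurements are qubit measurements, and after the basis choice that puts $\hat{u}_i$ into the $\sigma_z$ form they take exactly the shape of~(\ref{eq:5}) for some angles $\alpha_i^{(k_i)},\phi_i^{(k_i)}$; hence by the qubit optimization (equation~(\ref{eq:6}) and the $N$-party analogue of the maximization carried out for $\mathcal{S}_3$) one has $p_{\vec{k}}-q_{\vec{k}}\le S_{\max}^{(N)}$, the maximal qubit value. Therefore $\mathcal{S}_N\le S_{\max}^{(N)}$, with equality iff $p_{\vec{k}}-q_{\vec{k}}=S_{\max}^{(N)}$ for every $\vec{k}$ with $\lambda_{\vec{k}}>0$.

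The key step is the qubit-level rigidity: for an $N$-qubit state obeying~(\ref{eq:3}), $p_{\vec{k}}-q_{\vec{k}}=S_{\max}^{(N)}$ should force first that the induced observables carry the optimal angles ($\alpha_i^{(k_i)}=\pi/2$) and then, via~(\ref{eq:6}), that the remaining free parameters $a,\delta$ are pinned to their optimizing values, so that $\rho_{\vec{k}}=|\Psi_{max}^{\star N}\rangle_{\vec{k}}\langle\Psi_{max}^{\star N}|$. Granting this, $|\chi\rangle=\bigoplus_{\vec{k}}\sqrt{\lambda_{\vec{k}}}\,|\Psi_{max}^{\star N}\rangle_{\vec{k}}$. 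Finally I would apply the local isometry $\Phi_{A_i}$ on each party, defined exactly as in the tripartite proof by $|2m,0\rangle_{A_iA_i'}\mapsto|2m,0\rangle_{A_iA_i'}$ and $|2m+1,0\rangle_{A_iA_i'}\mapsto|2m,1\rangle_{A_iA_i'}$ (together with, if needed, a block-controlled phase correction making the per-block phases $\phi_i^{(k_i)}$ uniform, which is still a local operation on $\mathcal{A}_i$); this maps the qubit degree of freedom of each block into the ancilla register $A_i'$ and leaves the block label in $A_i$, yielding $(\bigotimes_i\Phi_{A_i})|\chi\rangle_{A_1\ldots A_N}|0\ldots0\rangle_{A_1'\ldots A_N'}=|\sigma_N\rangle_{A_1\ldots A_N}\otimes|\Psi_{max}^{\star N}\rangle_{A_1'\ldots A_N'}$, with $|\sigma_N\rangle$ the junk state.

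I expect the main obstacle to be the qubit rigidity claim, in particular handling blocks whose induced observables have suboptimal angles: one must verify that the $N$-party objective $\mathcal{S}_N$, as a function of $(\alpha_i,\phi_i,a,\delta)$ obtained from~(\ref{eq:6}), attains its supremum at a unique point (up to the local $Z$-phase gauge and global phase), so that achieving $S_{\max}^{(N)}$ in a block leaves no freedom in $\rho_{\vec{k}}$. For $N=3$ this was a finite explicit optimization; for general $N$ the number of amplitudes grows, so the argument should be organized so that the per-party angle optimizations decouple --- each $\cot(\alpha_i/2)$ entering the normalization and the relevant probabilities in a controlled way --- reducing the problem to essentially the same one-parameter analysis as in the tripartite case. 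The remaining ingredients, namely the Jordan-type decomposition, passing the linear constraints to blocks, and the SWAP isometry, are routine and identical in form to the $N=3$ proof.
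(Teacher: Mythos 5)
Your proposal follows essentially the same route as the paper: the paper in fact omits the proof entirely, stating only that it is analogous to Theorem~\ref{self3}, and your reconstruction (Masanes-type block decomposition into qubit subspaces, passing the zero constraints and the convex combination $\mathcal{S}_N=\sum_{\vec{k}}\lambda_{\vec{k}}(p_{\vec{k}}-q_{\vec{k}})$ to the blocks, per-block saturation forcing $\rho_{\vec{k}}=|\Psi_{max}^{\star N}\rangle\langle\Psi_{max}^{\star N}|$, and the SWAP-style local isometry) is exactly the argument of Theorem~\ref{self3} that the paper intends to be carried over. The qubit-level uniqueness of the maximizer that you flag as the main obstacle is indeed the step the paper also leaves unargued (it is asserted as an ``if and only if'' in the tripartite proof without justification), so your proposal is no less complete than the paper's own treatment.
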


\begin{proof}
We omit the proof as it is similar to the proof of Theorem \ref{self3}. 
\end{proof}

\paragraph{Numerical experiments:  Non-ideal constraints and entanglement cost.---}\label{sec5}
In an ideal case for three party scenario, the paradox demands that marginal probabilities should be equal to zero. But in a real experiment, this demand is very difficult to ensure. Hence a modified paradox that accounts real experiment can be of the form,

\begin{equation}\label{eq:cabni}
\begin{split}
& P_{AB}(++ | A_1B_0)  \leq \epsilon, \\
& P_{BC}(++ | B_1C_0)  \leq \epsilon, \\
& P_{AC}(++ | A_0C_1)  \leq \epsilon. \\
\end{split}  
\end{equation}

where $\epsilon \geq 0$ is some noise parameter. Without loss of generality, we can always choose the same noise parameter for each of the marginals. The local bound for the modified paradox takes the form,
{\small \begin{equation*}
\mathcal{S}_3^{\epsilon} \equiv P(+, +, + | A_0, B_0, C_0) - P(-, -, - | A_1, B_1, C_1) \leq 3\epsilon.
\end{equation*}}

Now we will show that even with this modified paradox the maximum value of $\mathcal{S}_3^{\epsilon}$ still can be achieved by performing projective measurements on pure three-qubits states. The analytical technique being introduced is not suitable for this scenario, necessitating a numerical demonstration. To begin, we will ascertain the upper limit of $\mathcal{S}_3^{\epsilon}$ by employing the NPA method pioneered by Navascues, Pironio, and Acin. Next, we show that for a particular class of three-qubit state, there exist projective measurements that achieve this upper bound with a very high order of accuracy. Note that, we only need to consider the noise parameter to be $0 \leq \epsilon < 1/3$. We obtain the upper bound on $\mathcal{S}_3^{\epsilon}$ by maximizing it over $\mathcal{Q}_3$ (NPA Hierarchy of level three). Next, we examined a specific category of pure states involving three qubits

{\footnotesize \begin{equation}\label{eq:state}
\begin{split}
&|\psi\rangle_{ABC} = a_{000}|000\rangle + a_{001}\left(e^{-i\phi_2}|010\rangle+e^{-i\phi_1}|100\rangle+e^{-i\phi_3}|001\rangle\right)\\
&+a_{011}\left(e^{-i(\phi_2+\phi_3)}|011\rangle+e^{-i(\phi_1+\phi_3)}|101\rangle+e^{-i(\phi_1+\phi_2)}|110\rangle\right)\\
&+a_{111}e^{-i(\phi_1+\phi_2+\phi_3)}|111\rangle,  
\end{split}
\end{equation}}
 
We performed numerical maximization of $\mathcal{S}_3^{\epsilon}$ across all states of the form \ref{eq:state} and measurement parameters, subject to the constraints outlined in equation (\ref{eq:cabni}). This process yielded $\mathcal{S}_3^{\epsilon}$ values for various noise parameter settings, ranging from $0$ to $\frac{1}{3}$. Fig. \ref{fig:3H} illustrates the maximum $\mathcal{S}_3^{\epsilon}$ values under $\mathcal{Q}_3$, three-qubit quantum states, and local model. Notably, we observed convergence between the maximum $\mathcal{S}_3^{\epsilon}$ values under $\mathcal{Q}_3$ and three-qubit quantum states, with an accuracy of approximately $10^{-6}$ within the error margin $0 \leq \epsilon \leq 0.08$.

We have also calculated different entanglement monotones for these states against $GHZ$ and $W$ states in three and four parties, respectively. These results are depicted in Fig. \ref{table}. These states require very few resources to $GHZ$ and $W$ states.

\begin{figure*}[tbh!]
  \centering
  \subfigure{\includegraphics[width = 0.48\textwidth]{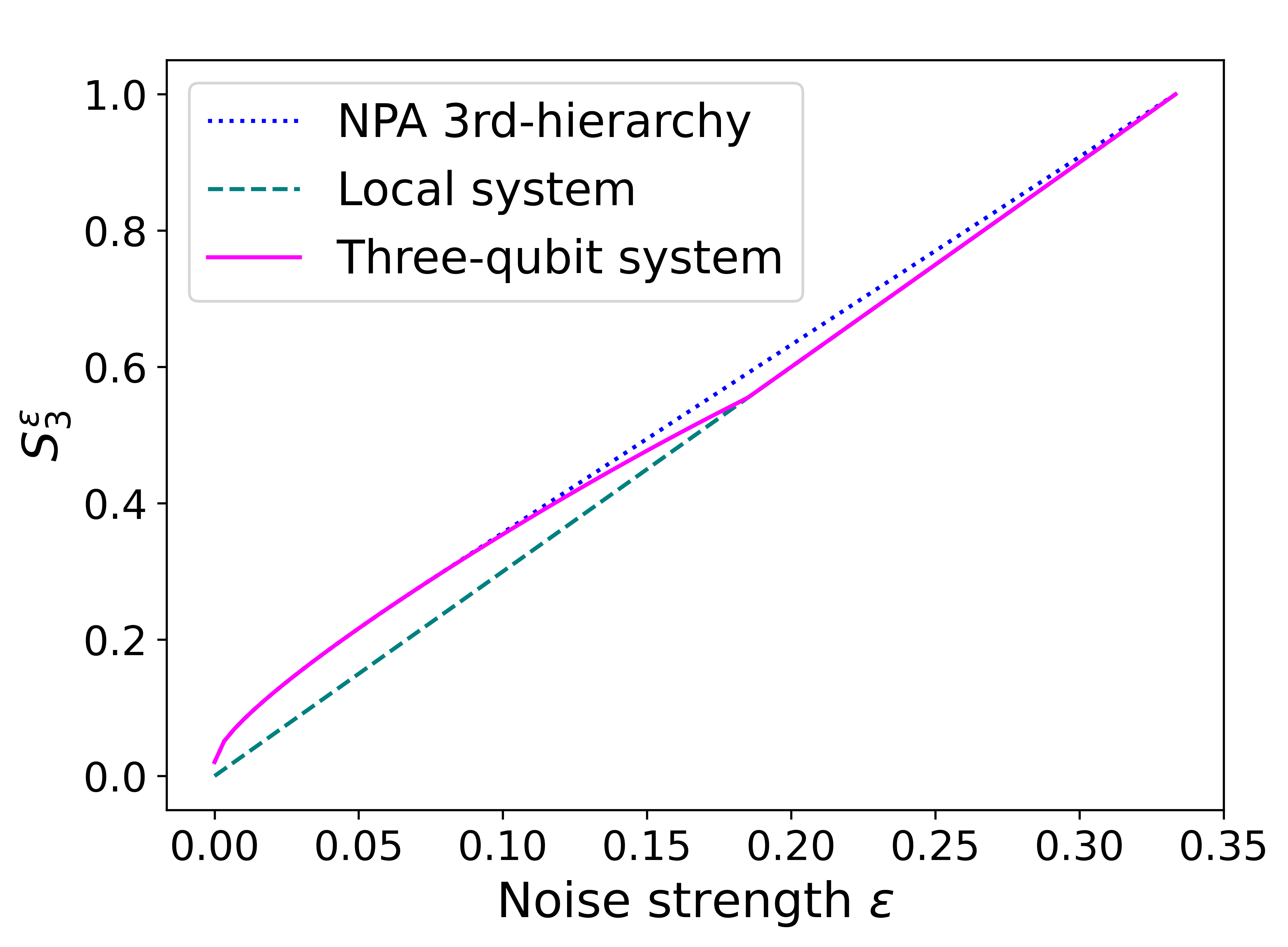}\label{fig:3H}}\quad
  \subfigure{\includegraphics[width = 0.465\textwidth]{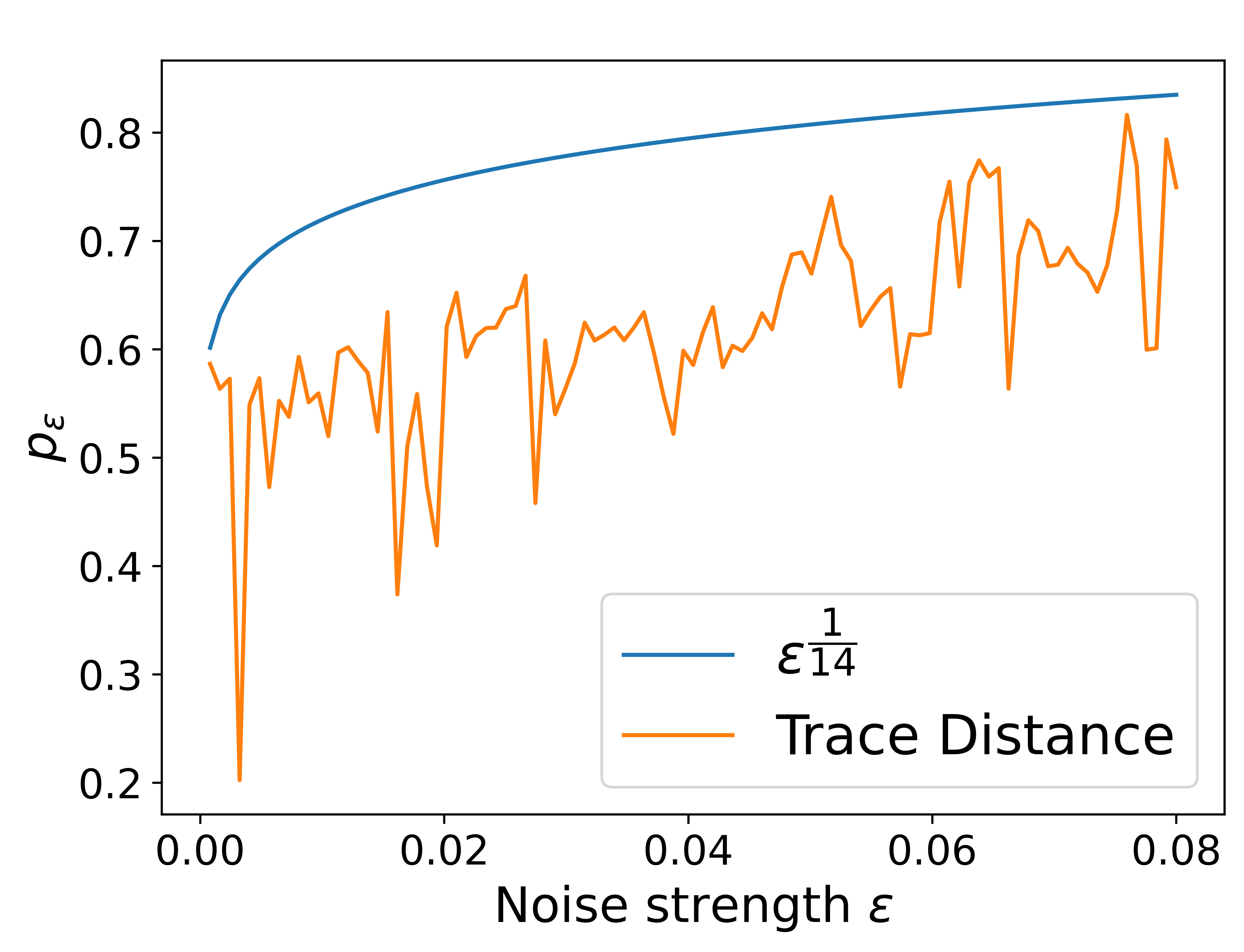}\label{fig:3G}}
  \caption{Robustness of our self-testing scheme.  Fig~\ref{fig:3H} depicts a plot of the maximum value of $\mathcal{S}_3^{\epsilon}$ of the paradox for triparty against the noise parameter $\epsilon$. The three-qubit system matches $\mathcal{Q}_3$ in the error bound range $0 \leq \epsilon \leq 0.08$. Fig~\ref{fig:3G} contains the variation of trace distance between ideal state and non-ideal state according to the range of $\epsilon$. Here $p_{\epsilon} = 0.5*\norm{|\Psi_{max}^\star\rangle_{ABC} - |\Psi^{\epsilon}\rangle_{ABC}}_1$, where $|\Psi^{\epsilon}\rangle_{ABC}$ varies with $\epsilon$. We see that the distance is upper bounded by $\epsilon^{\frac{1}{14}}.$}
    \label{fig1}
\end{figure*}

\begin{figure*}[tbh!]
  \centering
  \subfigure{\begin{tabular}{|c||c|c|c|}
\hline
 &GHZ&W&$|\Psi_{max}^{\star 3}\rangle$ \\ \hline
 Concurrence&  1&  0.94&  0.56\\ \hline
 Negativity&  0.5&  0.47&  0.28\\ \hline
 Log\ Negativity&  1&  0.96&  0.64\\ \hline
 Entanglement\ Entropy&  1&  0.92&  0.42\\ \hline
 R\'enyi\ Entanglement\ Entropy&  1&  0.96&  0.64\\ \hline
\end{tabular}\label{table:1}}\quad
  \subfigure{\begin{tabular}{|c||c|c|c|}
\hline
 &GHZ-4&W-4&$|\Psi_{max}^{\star 4}\rangle$ \\ \hline
 Concurrence&  1&  1&  0.39\\ \hline
 Negativity&  0.5&  0.43&  0.17\\ \hline
 Log\ Negativity&  1&  0.89&  0.43\\ \hline
 Entanglement\ Entropy&  1&  1&  0.24\\ \hline
 R\'enyi\ Entanglement\ Entropy&  1&  1&  0.48\\ \hline
\end{tabular}\label{table:2}}
  \caption{The quantum states that are self-tested via our scheme are non-maximally entangled.  Table~\ref{table:1} Different measure of entanglement monotone in 2 versus 1 bi-partition for three-qubit states, $GHZ$, $W$, and $|\Psi_{max}^{\star 3}\rangle$ respectively. $|\Psi_{max}^{\star 3}\rangle$ requires less resource than the other maximally entangled states. Table~\ref{table:2} Different measure of entanglement monotone in 2 versus 2 bi-partition for four-qubit states, $GHZ$, $W$, and $|\Psi_{max}^{\star 4}\rangle$ respectively. $|\Psi_{max}^{\star 4}\rangle$ requires less resource than the other maximally entangled states.}
    \label{table}
\end{figure*}

\paragraph{Discussion.---}\label{sec6}
To sum up, we've introduced a paradox similar to Hardy's paradox \cite{RWZ14}, designed for arbitrary $N$-partite systems where each parties have the choice of two measurements with arbitrary outcomes. Detecting a breach of our paradox witnesses genuine multipartite non-local behavior. Additionally, we've proved self-testing statements for genuinely non-local, but non-maximally multipartite entangled states.

Genuine multipartite non-local correlations are regarded as the most potent manifestation of genuine multipartite quantum correlations. The process of self-testing these quantum states serves to enhance the security and efficiency of quantum communication protocols and quantum key distribution. Moreover, it contributes to a deeper understanding of the intricate geometry of quantum boundaries. The inherent genuine non-locality displayed by our self-testing state makes it a valuable asset for secure quantum key distribution applications.

In a bipartite scenario, the Bell state is the unique maximally entangled state. However, in a multiparty setting, this uniqueness does not hold. Take, for instance, the tripartite scenario where two types of maximally entangled states exist: the $GHZ$ state and the $W$ state. Interestingly, when the party count exceeds three, there is no such well-defined characterization found in the existing literature. It is important to note that both the $GHZ$ and $W$ states qualify as genuine entangled states. Our numerical findings, showcased in Tables \ref{table:1} and \ref{table:2}, reveal that the quantum state leading to maximal violation in our inequality requires fewer resources compared to the $GHZ$ and $W$ states. It is worth emphasizing that in measuring entangled monotone, we employ bipartition, leveraging the presence of a unique maximally entangled state in the bipartite scenario.

In the context of robust self-testing, a typically expected outcome is that any quantum state corresponding to $\epsilon$ sub-optimal scenario should be approximately $\mathcal{O}(\sqrt{\epsilon})$ close to the quantum state associated with the ideal case upto local isometries in terms of trace distance. In this work, we have demonstrated partially robust self-testing, specifically showcasing that the trace distance between the quantum state linked to the ideal case and the one associated with a non-ideal scenario is upper-bounded by $\epsilon^{\frac{1}{14}}$. It's noteworthy that we consider the quantum state related to the non-ideal scenario as a three-qubit state. The numerical data that provides the substance for the aforementioned claim has been plotted in Figure \ref{fig:3G}. Moving forward, our immediate agenda involves developing a comprehensive robust self-testing scheme. Additionally, we plan to delve into the exploration of the amount of randomness that can be extracted from these correlations corresponding to various noise scales. This exploration will be the subject of our forthcoming research efforts.

\paragraph{Acknowledgement.---} Ranendu Adhikary acknowledges funding and support from Indian Statistical Institute, Kolkata. We would like to acknowledge stimulating discussions with Kishor Bharti, Ashutosh Rai, Tamal Guha, Subhendu B. Ghosh, and  Snehasish Roy Chowdhury.

\bibliography{cab}
\bibliographystyle{apsrev4-1}

\end{document}